\definecolor{ForestGreen}{rgb}{0.1333,0.5451,0.1333}
\definecolor{DarkRed}{rgb}{0.8,0,0}
\definecolor{Red}{rgb}{1,0,0}
\newtheorem{theorem}{Theorem}[section]
\newtheorem{corollary}[theorem]{Corollary}
\newtheorem{lemma}[theorem]{Lemma}
\newtheorem{observation}[theorem]{Observation}
\newtheorem{claim}[theorem]{Claim}
\newtheorem{definition}[theorem]{Definition}
\newtheorem*{theorem*}{Theorem}
\newtheorem*{corollary*}{Corollary}
\newtheorem*{conjecture*}{Conjecture}
\newtheorem*{lemma*}{Lemma}
\newtheorem*{thm*}{Theorem}
\newtheorem*{prop*}{Proposition}
\newtheorem*{obs*}{Observation}
\newtheorem*{definition*}{Definition}
\newtheorem*{remark*}{Remark}
\newtheorem*{rec*}{Recommendation}
\newenvironment{fminipage}%
  {\begin{Sbox}\begin{minipage}}%
  {\end{minipage}\end{Sbox}\fbox{\TheSbox}}
\def\defeq{\stackrel{\mathrm{def}}{=}}
\renewcommand{\hat}{\widehat}
\renewcommand{\tilde}{\widetilde}
\DeclareFontFamily{U}{mathb}{\hyphenchar\font45}
\DeclareFontShape{U}{mathb}{m}{n}{<5> <6> <7> <8> <9> <10> gen * mathb
<10.95> mathb10 <12> <14.4> <17.28> <20.74> <24.88> mathb12}{}
\DeclareSymbolFont{mathb}{U}{mathb}{m}{n}
\DeclareMathSymbol{\rcirclearrow}{\mathbin}{mathb}{'367}
\newif\ifrandom
\newcommand{\todolater}[1]{}
\DeclareMathOperator{\econg}{econg}
\DeclareMathOperator{\dist}{dist}
\title{A Simple Dynamic Spanner via APSP}
\newcommand*\samethanks[1][\value{footnote}]{\footnotemark[#1]}
\author{Rasmus Kyng\thanks{The research leading to these results has received funding from grant no. 200021 204787 of the Swiss National Science Foundation.} \\ ETH Zurich \\kyng@inf.ethz.ch \and Simon Meierhans\samethanks[1] \\ ETH Zurich \\ mesimon@inf.ethz.ch \and Gernot Zöcklein \\ ETH Zurich \\gzoecklein@ethz.ch}
\begin{document}

\maketitle

\begin{abstract}
    We give a simple algorithm for maintaining a $n^{o(1)}$-approximate spanner $H$ of a graph $G$ with $n$ vertices as $G$ receives edge updates by reduction to the dynamic All-Pairs Shortest Paths (APSP) problem. Given an initially empty graph $G$, our algorithm processes $m$ insertions and $n$ deletions in total time $m^{1 + o(1)}$ and maintains an initially empty spanner $H$ with total recourse $n^{1 + o(1)}$. When the number of insertions is much larger than the number of deletions, this notably yields recourse sub-linear in the total number of updates. 
    
    Our algorithm only has a single $O(\log n)$ factor overhead in runtime and approximation compared to the underlying APSP data structure. Therefore, future improvements for APSP will directly yield an improved dynamic spanner.

\end{abstract}

\section{Introduction}

Given a graph $G = (V, E)$, a $\delta$-approximate spanner $H$ is  a subgraph of $G$ with the same vertex set such that $\dist_H(u, v) \leq \delta \cdot \dist_G(u, v)$ for every pair of vertices $u, v \in V$. Althöfer, Das, Dobkin, Joseph and Soares developed the now-classical greedy spanner \cite{althofer1993spars}. 

Given an integer $\delta$, their algorithm incrementally constructs a $2\delta$-approximate spanner $H$ of $G = (V, E)$ as follows.
\begin{itemize}
    \item[1)] Initialize $H = (V, \emptyset)$.
    \item[2)] Consider the edges $(u, v) \in E$ in arbitrary order. If $\dist_H(u, v) > 2 \delta$, add $(u, v)$ to $H$. 
\end{itemize}
When this process finishes, every edge on the shortest path between two arbitrary vertices in $G$ can be replaced by a detour of length at most $2 \delta$ in $H$. The claimed approximation $\dist_H(u, v) \leq 2\delta \cdot \dist_G(u, v)$ directly follows. 

Crucially, the graph $H$ is also sparse. This follows from the folklore observation that high-girth graphs are sparse. 
The girth of a graph is the length of the shortest cycle. By construction, the girth of $H$ is $2\delta + 1$, which implies total edge count is bounded by $|V|^{1 + \frac{1}{\delta}}$.
In this article we consider the important setting $\delta \geq \log(|V|)$.

\begin{observation} \label{obs:girth}
    Every graph $H = (V, E_H)$ with girth $2\delta + 1$ contains at most $2|V|^{1 + \frac{1}{\delta}}$ edges.
\end{observation}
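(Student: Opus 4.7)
The plan is to combine a Moore-bound-style BFS growth argument with an iterative low-degree pruning. Concretely, I would first establish the min-degree version of the statement, namely that any graph with minimum degree $\geq d$ and girth $\geq 2\delta+1$ satisfies $|V| \geq (d-1)^\delta$, and then derive the edge count in the general case by peeling off vertices of degree below a carefully chosen threshold.

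For the min-degree claim, I would root a BFS at an arbitrary vertex $v$ and write $L_i$ for the set of vertices at distance exactly $i$ from $v$. The core observation is that, because the girth is at least $2\delta+1$, for each $i$ with $0 \leq i \leq \delta-1$ no edge can lie inside the single level $L_i$ (such an edge would yield a closed walk of length $2i+1$, hence an odd cycle of length $\leq 2\delta-1$), and no vertex of $L_{i+1}$ can have two distinct neighbors in $L_i$ (producing an even cycle of length $\leq 2i+2 \leq 2\delta$). Combined with the minimum-degree assumption, this gives $|L_1| \geq d$ and $|L_{i+1}| \geq (d-1)|L_i|$ for $1 \leq i \leq \delta-1$, so $|V| \geq d(d-1)^{\delta-1} \geq (d-1)^\delta$ whenever $d \geq 2$.

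Armed with this, I would set $d := \lfloor n^{1/\delta}\rfloor + 2$ with $n = |V|$, and iteratively delete any vertex of current degree $< d$. Since $(d-1)^\delta > n$, the min-degree lemma applied to the surviving subgraph (which still has girth $\geq 2\delta+1$) forbids it from being nonempty, so the process eventually removes every vertex. As each deletion accounts for at most $d-1$ edges, we conclude $|E_H| \leq n(d-1) \leq n(n^{1/\delta}+1) \leq 2 n^{1+1/\delta}$, as desired.

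The main technical obstacle is the girth bookkeeping in the BFS step: one must verify that both odd cycles (from same-level edges) and even cycles (from a vertex having two parents in the previous level) are ruled out precisely through level $\delta-1$, which is exactly what the threshold $2\delta+1$ just barely grants. Once this is pinned down, everything else reduces to a geometric series estimate and a simple tally of edges removed during pruning.
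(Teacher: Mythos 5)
Your proof is correct and takes essentially the same route as the paper: iteratively peel off low-degree vertices with a threshold near $|V|^{1/\delta}$, then derive a contradiction from the BFS ball growing geometrically thanks to the girth bound. The paper states the level-by-level growth argument a bit more tersely (it just notes that no two vertices at depth $<\delta$ share a non-tree edge), but the underlying Moore-bound counting is the same.
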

\begin{proof}
    Repeatedly remove vertices of degree at most $d =  |V|^{ \frac{1}{\delta}} + 1$ from $H$. This process removes at most $d \cdot |V|$ edges. Assume for the sake of a contradiction that there is some remaining graph $\tilde{H} \subseteq H$ after this process terminates. Fix an arbitrary remaining vertex $r$, and consider the breath-first search tree $T \subseteq \tilde{H}$ to depth $\delta$ from $r$. Since the girth of $\tilde{H}$ is at least $2\delta + 1$, no two vertices at depth $< \delta$ in $T$ share an edge in $\tilde{H}\setminus T$. Thus, the number of vertices in $T$ is at least
        $\sum_{i = 0}^{\delta} (d-1)^i > (d-1)^\delta = |V|$
    which is a contradiction.
\end{proof}

Despite the simplicity of this framework, previous deterministic \emph{dynamic algorithms} for maintaining a spanner $H \subseteq G$ as $G$ undergoes updates were based on expander decompositions \cite{maxflow, detmax}, which are arguably more natural for capturing cuts than distances. 

However, in \cite{bhattacharya_et_al:LIPIcs.ESA.2022.17} the authors observed that the greedy spanner construction can be used to maintain a spanner $H \subseteq G$ as $G$ receives edge deletions with low amortized recourse. As part of their almost-linear time incremental min-cost flow algorithm, \cite{CKL24} presented a computationally efficient algorithm for maintaining a spanner of a fully-dynamic graph $G$ based on the greedy spanner.\footnote{They crucially exploit that the recourse is sub-linear in the number of insertions. Expander decomposition based algorithms seem less suitable for achieving such a result.} Their algorithm is complicated by the fact that they need to process vertex splits in addition to edge updates. In this article, we present a simplified algorithm for maintaining a spanner of an edge-dynamic graph via an All-Pairs Shortest Paths (APSP) oracle.

\begin{theorem}  \label{thm:main}
    Given an initially empty edge-dynamic graph $G = (V, E)$ on $n = |V|$ vertices receiving $m$ edge insertions and up to $n$ edge deletions where $m \geq n$, there is an algorithm maintaining a  $n^{o(1)}$-approximate spanner $H$ with total update time $m^{1 + o(1)}$ and total recourse $n^{1 + o(1)}$. 
\end{theorem}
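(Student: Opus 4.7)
The plan is to mirror the classical greedy spanner construction from the introduction, using a black-box dynamic APSP oracle on $H$ to replace the static shortest-path tests. Concretely, I would assume access to a data structure maintaining $H$ under insertions and deletions in amortized time $m^{o(1)}$ per update, answering pairwise distance queries in time $m^{o(1)}$ with $\alpha = n^{o(1)}$ approximation, and also returning a corresponding short witness path on demand. Setting the stretch parameter $\delta = \Theta(\log n)$ makes $|H| \le 2 n^{1+1/\delta} = n^{1+o(1)}$ by Observation~\ref{obs:girth}, while the overall spanner approximation is $2 \delta \alpha = n^{o(1)}$.

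For each $G$-insertion of $(u,v)$ I query the APSP oracle: if $\tilde{\dist}_H(u,v) > 2\delta$, I add $(u,v)$ to $H$ (updating the oracle), and otherwise I store the returned path as a \emph{witness} $P_{uv} \subseteq H$ of length at most $2\delta \alpha$. An inverse index maps each $H$-edge $e$ to the list $L_e$ of pairs currently witnessed through $e$. Deletions of edges in $G \setminus H$ are trivial. When $(u,v) \in H$ is deleted from $G$, I remove it from $H$, update the oracle, and iterate through $L_{(u,v)}$: each pair is re-queried and either re-witnessed with a fresh short path or promoted into $H$. Since each APSP operation costs $m^{o(1)}$, the final running time bound follows once we control the total number of re-examinations.

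To bound recourse and work I would use a potential argument on $\Phi = \sum_{(u,v) \in G \setminus H} |P_{uv}|$. A $G$-insertion raises $\Phi$ by at most $2 \delta \alpha$, so insertions contribute $m \cdot n^{o(1)}$ to $\Phi$ in aggregate. Each re-examination is triggered by destroying a witness incidence (the $H$-edge that just disappeared), so the total number of re-examinations is at most the total witness-incidence mass ever created. Coupled with Observation~\ref{obs:girth} to bound $|H| \le n^{1+o(1)}$, the number of genuine recourse events (edges promoted into $H$) is at most $n^{1+o(1)}$, while the total re-examination count is at most $m \cdot n^{o(1)}$, giving $m^{1+o(1)}$ total runtime after multiplying by the per-operation APSP cost.

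The main obstacle is that witnesses can regenerate: after an $H$-edge is deleted, every affected pair receives a fresh witness of length up to $2\delta \alpha$, which itself can be destroyed by a later $H$-deletion, threatening the naive potential bound. To close this I would either demand a ``stable path'' property from the APSP oracle so that the total witness-edge incidences it ever produces across all queries is $m \cdot n^{o(1)}$, or arrange the witnesses hierarchically through a tower of APSP oracles so that each $H$-deletion cascades into only $n^{o(1)}$ witnesses on amortized average. Either route, combined with the observation that at most $n$ $G$-deletions occur and thus at most $n$ edges are ever removed from $H$ through external events, should deliver the claimed $m^{1+o(1)}$ total update time and $n^{1+o(1)}$ total recourse.
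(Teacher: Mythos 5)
Your setup matches the paper's: run a dynamic greedy spanner driven by an APSP oracle on $H$, keep a witness path for each non-spanner edge, and on an $H$-edge deletion re-examine exactly the pairs whose witness used that edge. The threshold should be $2\delta\alpha$ rather than $2\delta$ so that the girth guarantee on $H$ survives the multiplicative error, but that is cosmetic. The genuine gap is the one you flagged and left open: nothing in your construction bounds how many witness paths route through any single $H$-edge, so a single external deletion can trigger up to $\Theta(m)$ re-examinations, and $n$ such deletions can force $\Theta(nm)$ work. Your potential $\Phi=\sum_{(u,v)\in G\setminus H}|P_{uv}|$ does not close this: when an $H$-edge $e$ is deleted, $\Phi$ drops by $\sum_{(u,v)\in L_e}|P_{uv}|$ but is immediately replenished by $|L_e|\cdot O(\delta\alpha)$ from the fresh witnesses, so $\Phi$ can grow, and the ``total witness-incidence mass ever created'' is itself unbounded without a congestion control. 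The two escape hatches you propose (a ``stable path'' APSP oracle, or a tower of hierarchical oracles) are speculative and not supported by Definition~\ref{def:apsp}; neither is established in the proposal.

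The paper closes this gap with a simple congestion cap rather than any stability property of the oracle. It maintains a second graph $\hat H\subseteq H$ and runs the APSP oracle on $\hat H$ rather than $H$. Each newly queried witness path lives in $\hat H$, and whenever an edge of $\hat H$ reaches edge-congestion $m/n$ it is evicted from $\hat H$ (but kept in $H$, so all existing embeddings stay valid and the spanner property is unaffected). This gives a hard per-edge bound $\econg(\Pi_{G\to H},f)\le m/n$ (Claim~\ref{clm:econg}), so each of the at most $n$ external deletions triggers at most $m/n$ re-insertions, for at most $2m$ total calls to $\textsc{InsertEdge}$ (Corollary~\ref{corr:all_insertions}). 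The size bound then splits: $\hat H$ has girth $>2\log n$ so $|E(\hat H)|=O(n)$ by Observation~\ref{obs:girth}, and edges in $H\setminus\hat H$ each absorbed $m/n$ paths out of a cumulative path mass of $O(m\gamma\log n)$, so there are $O(\gamma n\log n)$ of them (Claim~\ref{clm:number_edges}). This two-graph congestion-capping device is the key idea your proposal is missing, and it is what turns the informal ``bound re-examinations'' plan into a proof.
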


We note in particular that the total recourse is almost-optimal and only depends on the number of deletions and vertices, and is therefore completely independent of the number of insertions. We refer the reader to \Cref{def:apsp} and \Cref{thm:main_apsp} for a more fine grained analysis of the dependency on the underlying APSP data structure. 
Our algorithm can use any path-reporting approximate APSP data structure and only adds an $O(\log n)$ factor overhead in runtime and approximation.

\paragraph{Prior Work. } In \cite{rand_spanner}, the authors present a randomized dynamic spanner with poly-logarithmic stretch and amortized update time against an oblivious adversary.\footnote{Oblivious adversaries generate the sequence of updates at the start, i.e. independently from the random output of the algorithm. Adaptive adversaries can use the previous output to fool a randomized algorithm. } Then \cite{forster19} significantly reduced the poly-logarithmic overhead in both size and update time. In \cite{bernstein_et_al:LIPIcs.ICALP.2022.20}, the authors gave the first non-trivial algorithm for maintaining a dynamic spanner against an adaptive adversary. They maintain a spanner of near linear size with poly-logarithmic amortized update time and stretch. Deterministic algorithms for maintaining spanners of dynamic graphs with bounded degree that additionally undergo vertex splits were developed in the context of algorithms for minimum cost flow \cite{maxflow, detmax, CKL24}. These have sub-polynomial overhead.  

\section{Preliminaries}

\paragraph{Static and Dynamic Graphs.} We let $G = (V, E)$ denote unweighted and undirected graphs with vertex set $V$ and edge set $E$. When we want to refer to the vertex/edge set of some graph $G$, we sometimes use $V(G)$ and $E(G)$ respectively. 
For a graph $G = (V, E)$ and a pair of vertices $u,v \in V$ we let $\dist_G(u, v)$ denote the length of a shortest $uv$-path in $G$. 
We call a graph edge-dynamic if it receives arbitrary updates to $E$ (i.e. edge insertions and deletions). We refer to the total number of such edge updates a graph receives as the recourse. 

\paragraph{Edge Embeddings.} Given two graph $H$ and $G$ on the same vertex set such that $H \subseteq G$, we let $\Pi_{G \rightarrow H}$ be a function that maps edges in $E(G)$ to paths in $H$. Given an embedding $\Pi_{G \rightarrow H}$, we define the edge congestion $\econg(\Pi_{G \rightarrow H}, e) \defeq |\{f \in E(G): e \in \Pi_{G \rightarrow H}(f) \}|$ for $e \in E(H)$.

\section{Dynamic All-Pairs Shortest Paths}

We first define approximate \emph{All-Pairs Shortest Paths} (APSP) data structures for edge-dynamic graphs. 

\begin{definition}[APSP] \label{def:apsp}
    For an initially empty edge-dynamic graph $G = (V, E)$ a $\gamma$-approximate $\alpha$-update $\beta$-query APSP data structure supports the following operations.
    \begin{itemize}
        \item $\textsc{AddEdge}(u,v)$ / $\textsc{RemoveEdge}(u, v)$: Add/Remove edge $(u, v)$ from/to $G$ in (amortized) time $\alpha$.
        \item $\textsc{Dist}(u, v)$: Returns a distance estimate $\widetilde{\dist}(u,v)$ such that
        \begin{equation*}
            \dist(u, v) \leq \widetilde{\dist}(u,v) \leq \gamma \cdot \dist(u,v). 
        \end{equation*}
        in (amortized) time $\beta$.
        \item $\textsc{Path}(u,v)$: Returns a simple $uv$-path $P$ in $G$ of length $|P| \leq \textsc{Dist}(u,v)$ in time $\beta \cdot |P|$. %
    \end{itemize}
\end{definition}

We refer the reader to \cite{CZ23, kyng2023dynamic, haeupler2024dynamicdeterministicconstantapproximatedistance} for recent results on path-reporting dynamic APSP.
The parameters $\gamma, \alpha$ and $\beta$ should currently be thought of as subpolynomial factors $n^{o(1)}$ in the number of vertices $n$. 

\section{A Dynamic Greedy Spanner}

\paragraph{A Simple Low-Recourse Algorithm. }  We first present a very simple dynamic variant of the greedy spanner that merely limits the number of changes to $H$, i.e. its recourse. Then we describe the necessary changes for obtaining an efficient version using a dynamic APSP datastructure. 

Let $G = (V, E)$ be an initially empty graph on $n$ vertices. We maintain a $2 \log n$ spanner $H \subseteq G$ as follows. 
\begin{itemize}
    \item When an edge $(u,v)$ is inserted to $G$, we check if $\dist(u,v) > 2 \log n$. If so, we add it to $H$.
    \item When an edge $(u,v)$ is inserted to $G$, we check if $\dist(u,v) > 2 \log n$. If so, we add it to $H$. When an edge $(u,v)$ is deleted from $G$, we remove it from $H$ and re-insert all edges in $E(G)\setminus E(H)$ using the procedure described above.
\end{itemize}
This algorithm still maintains that the girth of $H$ is at least $2 \log n + 1$, and therefore the graph $H$ contains at most $O(n)$ edges at any point. Since at most $n$ edges leave $H$ due to deletions a total recourse bound of $O(n)$ follows directly. 

\paragraph{A Simple Efficient Algorithm. } To turn the above framework into an efficient algorithm, we maintain an embedding $\Pi_{G \rightarrow H}$ that maps each edge in $G$ to a short path in $H$ and thus certifies that such a path still exists. We then only have to re-insert all edges whose embedding paths use the deleted edge. To bound the number of re-insertions, we carefully manage the edge congestion of $\Pi_{G \rightarrow H}$. Finally, we use an APSP data structure (See \Cref{def:apsp}) to obtain distances and paths. In the following, we assume there are at most $m$ insertions, $n$ deletions and that $|V| = n$.

Our algorithm internally maintains two graphs: The current spanner $H$ and a not yet congested sub-graph $\hat{H} \subseteq H$. We maintain that the edge congestion is strictly less than $m/n$ for every edge in $\hat{H}$, and maintain access to distances and paths in $\hat{H}$ via a $\gamma$-approximate $\alpha$-update $\beta$-query APSP data structure $\mathcal{D}_{\hat{H}}$. 
\begin{itemize}
    \item When an edge $e$ is inserted, we check if $\mathcal{D}_{\hat{H}}$ returns distance at most $2\gamma \log n$ between the endpoints. 
    If this is the case, we query it for a witness path $P$ and store $P$ as the embedding path of $e$. Then we check if any of the edges on the path now have $m/n$ embedding paths using them, and if so we remove them from $\hat{H}$.
Otherwise, we add the edge $e$ to $H$ and $\hat{H}$.
    \item When an edge $e$ is deleted, we remove it from $\hat{H}$ and $H$ and re-insert all edges that now have a broken embedding path using the procedure above. 
\end{itemize}
See \Cref{algo:simple_spanner} for detailed pseudocode.

\begin{algorithm}[!ht]
\caption{$\textsc{DynamicSpanner}()$}
\label{algo:simple_spanner}
\SetKwProg{Globals}{global variables}{}{}
\SetKwProg{Proc}{procedure}{}{}
\Proc{$\textsc{Initialize}(n, m)$}{
    $G \gets (V, \emptyset)$; $H \gets (V, \emptyset)$; $\hat{H} \gets (V, \emptyset)$; $\Pi_{G \rightarrow H} \gets \emptyset$ \tcp*{Let $V = \{1, \ldots, n\}$.}
    Let $\mathcal{D}_{\hat{H}}$ be a $\gamma$-approximate $\alpha$-update $\beta$-query APSP datastructure (See \Cref{def:apsp})
}
\Proc{$\textsc{InsertEdge}(e = (u, v))$}{
    $G \gets G \cup (e)$ \\
    \If{$\mathcal{D}_{\hat{H}}.\textsc{Dist}(u,v) \leq \gamma \cdot 2 \cdot \log n$}{
        $P \gets \mathcal{D}_{\hat{H}}.\textsc{Path}(e)$; $\Pi_{G \rightarrow H}(e) \gets P$ \\
        \ForEach{$f \in P$}{
            \If{$\econg(\Pi_{G \rightarrow H}, f) \geq \frac{m}{n}$}{
                $\hat{H} \gets \hat{H} \setminus f$; $\mathcal{D}_{\hat{H}}.\textsc{RemoveEdge}(f)$
            }
        }
    }\Else{
        $H \gets H \cup (e)$; $\hat{H} \gets H \cup (e)$; $\Pi_{G \rightarrow H}(e) \gets (e)$; $\mathcal{D}_{\hat{H}}.\textsc{AddEdge}(e)$
    }
}
\Proc{$\textsc{DeleteEdge}(e = (u, v))$}{
    $G \gets G \setminus e$ \\
    \If{$e \in H$}{
        $F \gets \{f \in E(G): e \in \Pi_{G \rightarrow h}(f)\}$ \tcp*{Edges with broken embedding paths.}
        $G \gets G \setminus F$ \tcp*{Temporarily remove edges in $F$ from $G$}
        \ForEach{$f \in F$}{
            $\Pi_{G \rightarrow H}(f) \gets \emptyset$ \tcp*{Delete broken embeddings.}
        }
        \ForEach{$f \in F$}{
            $\textsc{InsertEdge}(f)$ \tcp*{Re-insert edges in $F$.}
        }
    }
}
\end{algorithm}

\paragraph{Analysis. } 

We first show that $H$ is a $2 \cdot \gamma \cdot \log n$-spanner. 

\begin{lemma}
    The graph $H$ maintained by $\textsc{DynamicSpanner}()$ (\Cref{algo:simple_spanner}) is a $2 \cdot \gamma \cdot \log n$-spanner throughout. 
\end{lemma}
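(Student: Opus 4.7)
The plan is to reduce the lemma to a single invariant about the embedding $\Pi_{G \rightarrow H}$, namely:

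\medskip
\noindent\emph{(Inv)} At every point in time, for every $e \in E(G)$, $\Pi_{G \rightarrow H}(e)$ is a path contained in $E(H)$ connecting the endpoints of $e$ and $|\Pi_{G \rightarrow H}(e)| \leq 2\gamma \log n$.
\medskip

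Given (Inv), the spanner property follows immediately: for any $u,v \in V$, take a shortest path $u = w_0, w_1, \dots, w_k = v$ in $G$ (so $k = \dist_G(u,v)$), concatenate the embedding paths $\Pi_{G \rightarrow H}(w_i, w_{i+1})$ to obtain a walk from $u$ to $v$ in $H$ of length at most $k \cdot 2\gamma \log n$, and conclude $\dist_H(u,v) \leq 2\gamma \log n \cdot \dist_G(u,v)$.

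I would prove (Inv) by induction on the sequence of operations. At initialization it holds vacuously. For an insertion of $e = (u,v)$, there are two cases. If $\mathcal{D}_{\hat H}.\textsc{Dist}(u,v) \leq 2\gamma \log n$, then by \Cref{def:apsp} the path $P$ returned by $\mathcal{D}_{\hat H}.\textsc{Path}(e)$ lies in $\hat H \subseteq H$ and satisfies $|P| \leq \mathcal{D}_{\hat H}.\textsc{Dist}(u,v) \leq 2\gamma \log n$, so setting $\Pi_{G \rightarrow H}(e) \gets P$ preserves (Inv). Otherwise $e$ is added to both $H$ and $\hat H$ and we set $\Pi_{G \rightarrow H}(e) \gets (e)$, a length-one path in $H$. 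Crucially, the only edges removed from $\hat H$ during an insertion (for congestion reasons) remain in $H$, so no existing embedding path is damaged by these removals.

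For a deletion of $e$, if $e \notin H$ nothing changes. If $e \in H$, the algorithm identifies exactly the set $F$ of edges in $G$ whose embedding paths used $e$, temporarily removes them from $G$, clears their embeddings, and then re-inserts them one by one. All embeddings that did not use $e$ are untouched and therefore still satisfy (Inv), since only $e$ was removed from $H$. For each $f \in F$, the re-insertion restores a valid embedding by the insertion analysis above. Hence (Inv) holds after the deletion as well.

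The main subtle point — which is where I would spend the most care — is confirming during a deletion that the re-insertions of edges in $F$ do not interfere with one another or rely on embedding paths that are broken. This is handled cleanly because (a) edges in $F$ are temporarily removed from $G$, so no embedding path for them is required during the re-insertion loop; (b) each re-insertion queries $\mathcal{D}_{\hat H}$, whose current state reflects only edges still present in $\hat H \subseteq H$; and (c) once an embedding is fixed, the only way one of its edges can disappear from $H$ is by being deleted from $G$, which in turn triggers the same re-insertion cleanup. Thus the induction goes through and the lemma follows.
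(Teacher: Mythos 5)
Your proof is correct and follows essentially the same approach as the paper: maintain the invariant that every edge of $G$ always has an embedding path in $H$ of length at most $2\gamma\log n$, established at insertion and restored by the re-insertions in $\textsc{DeleteEdge}$. You simply spell out in more detail what the paper leaves implicit (the induction over operations, the fact that congestion-based removals only affect $\hat{H}$ and not $H$, and the final concatenation of embedding paths along a shortest path in $G$).
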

\begin{proof}
    When an edge $e \in E(G)$ is inserted, it afterwards has a valid embedding path of length at most $2 \cdot \gamma \cdot  \log n$ by the description of $\textsc{InsertEdge}(e)$. Therefore the edge has a valid embedding path at this point. However, if this any edge on this path is deleted, the edge is re-inserted by the description of $\textsc{DeleteEdge}(e)$. This establishes the claim.
\end{proof}

We then prove that the total number of times the routine $\textsc{InsertEdge}(e)$ is called is bounded by $2m$. In the following, we assume without loss of generality that $m/n$ is an integer. 

\begin{claim} \label{clm:econg}
    For all $f \in H$, $\econg(\Pi_{G \rightarrow H}, f) \leq m/n$ throughout.
\end{claim}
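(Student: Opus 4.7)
The plan is to prove \Cref{clm:econg} by induction on the sequence of updates processed by $\textsc{DynamicSpanner}$, strengthened with the auxiliary invariant that every edge $f \in \hat{H}$ satisfies $\econg(\Pi_{G \rightarrow H}, f) < m/n$. This stronger statement is exactly the condition preserved by the removal loop at the end of the path branch of $\textsc{InsertEdge}$: whenever the congestion of some $f \in \hat{H}$ hits $m/n$, $f$ is immediately evicted from $\hat{H}$. The base case is immediate, as initially $\Pi_{G \rightarrow H}$ is empty and all congestions vanish.

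For the inductive step I examine each operation. The only way any $\econg(\Pi_{G \rightarrow H}, f)$ can grow is via an assignment to $\Pi_{G \rightarrow H}(e)$ inside $\textsc{InsertEdge}(e)$. In the path branch, the returned path $P$ lies in $\hat{H}$, since $\mathcal{D}_{\hat{H}}$ operates on $\hat{H}$, so the auxiliary invariant gives $\econg(\Pi_{G \rightarrow H}, f) \leq m/n - 1$ for every $f \in P$ just before the assignment; afterwards $\econg(\Pi_{G \rightarrow H}, f) \leq m/n$, and the subsequent loop removes from $\hat{H}$ exactly those $f$ whose congestion reached $m/n$, restoring the auxiliary invariant. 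In the else branch, $e$ was not previously in $H$, so no embedding path used $e$ and $\econg(\Pi_{G \rightarrow H}, e) = 0$ beforehand; after setting $\Pi_{G \rightarrow H}(e) \gets (e)$ we have $\econg(\Pi_{G \rightarrow H}, e) = 1 \leq m/n$. The $\textsc{DeleteEdge}$ handler only clears embeddings via $\Pi_{G \rightarrow H}(f) \gets \emptyset$, which only decreases congestions, before dispatching re-insertions that fall under the insertion case above.

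The only real subtlety, and what I anticipate as the main step to get right, is the invariant bookkeeping at the boundary of the two graphs $\hat{H} \subseteq H$: an edge removed from $\hat{H}$ at congestion $m/n$ remains in $H$ and must never see its congestion rise again. This is automatic because the APSP only returns paths within $\hat{H}$, and the only mechanism that reintroduces an edge into $\hat{H}$ is the else branch of $\textsc{InsertEdge}$, which is triggered solely when the edge is absent from $H$ -- a state that can only arise after a prior deletion has reset its congestion to $0$. With this observation the induction closes and yields the main claim $\econg(\Pi_{G \rightarrow H}, f) \leq m/n$ for all $f \in H$ at every point in the execution.
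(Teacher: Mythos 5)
Your proof is correct and follows essentially the same route as the paper: the paper's one-paragraph argument is exactly that an edge of $\hat{H}$ is evicted the moment its congestion reaches $m/n$ and, since $\mathcal{D}_{\hat{H}}$ only ever returns paths inside $\hat{H}$, such an edge never appears on a new embedding path, which is what your induction with the strengthened invariant $\econg(\Pi_{G \rightarrow H}, f) < m/n$ for $f \in \hat{H}$ formalizes. Your extra bookkeeping about how an evicted edge could only re-enter $\hat{H}$ after a deletion resets its congestion is a welcome (and correct) elaboration of the paper's terser ``never used again'' step.
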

\begin{proof}
    Whenever the edge congestion of an edge in $\hat{H}$ reaches $m/n$ it is removed from $\hat{H}$ and $\mathcal{D}_{\hat{H}}$. By the description of $\textsc{InsertEdge}()$ such an edge is never used in embedding paths again. Therefore the edge congestion of $\econg(\Pi_{G \rightarrow H}, f)$ is bounded by $m/n$ for every edge. 
\end{proof}

\begin{corollary}\label{corr:all_insertions}
    A sequence of at most $m$ external insertions and $n$ deletions causes at most $2m$ calls to $\textsc{InsertEdge}()$ in total.
\end{corollary}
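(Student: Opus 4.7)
The plan is to classify every call to $\textsc{InsertEdge}()$ by its source: either it comes from an external edge insertion, or it is a re-insertion performed inside $\textsc{DeleteEdge}()$. External insertions contribute at most $m$ calls by hypothesis, so it only remains to bound the re-insertions.

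Next I would observe that re-insertions happen only inside $\textsc{DeleteEdge}(e)$, and only in the branch where the deleted edge $e$ actually belongs to $H$. In that branch, the set $F$ of edges whose embedding breaks consists precisely of those edges $f \in E(G)$ with $e \in \Pi_{G \rightarrow H}(f)$, so $|F| = \econg(\Pi_{G \rightarrow H}, e)$. By \Cref{clm:econg}, this congestion is at most $m/n$ at the moment of deletion, so each such deletion produces at most $m/n$ re-insertions.

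The key subtlety to verify is that re-insertions cannot cascade. Inspecting $\textsc{InsertEdge}()$, it only queries $\mathcal{D}_{\hat{H}}$, updates $\Pi_{G \rightarrow H}$, and may remove congested edges from $\hat{H}$ (and from $\mathcal{D}_{\hat{H}}$) — but removing an edge from $\hat{H}$ is \emph{not} the same as deleting it from $H$, and in particular it never invokes $\textsc{DeleteEdge}()$ nor breaks any existing embedding path. Hence re-insertions generate only further $\textsc{InsertEdge}()$ calls to themselves, never additional rounds of re-insertion. So the only $\textsc{DeleteEdge}()$ calls that can trigger re-insertions are the $n$ external deletions.

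Combining, the total number of re-insertions is at most $n \cdot (m/n) = m$, and the total number of calls to $\textsc{InsertEdge}()$ is at most $m + m = 2m$, as claimed. The only place where any care is needed is the non-cascading argument above; the rest is a single multiplication using \Cref{clm:econg}.
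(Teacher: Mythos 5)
Your proof is correct and follows essentially the same route as the paper: the paper's one-line proof implicitly uses exactly your accounting — at most $m$ external calls, plus at most $m/n$ re-insertions per deletion by \Cref{clm:econg}, giving $n \cdot (m/n) = m$ additional calls. Your explicit check that re-insertions cannot cascade (since $\textsc{InsertEdge}()$ only removes edges from $\hat{H}$, never from $H$, and never invokes $\textsc{DeleteEdge}()$) is a worthwhile detail the paper leaves implicit.
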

\begin{proof}
    Directly follows from \Cref{clm:econg} and the description of $\textsc{DeleteEdge}()$.
\end{proof}

Given the previous corollary, we are ready bound the total recourse of $H$. 

\begin{claim}\label{clm:number_edges}
    $|E(H)| \leq O(\gamma \cdot n \log n)$ throughout. 
\end{claim}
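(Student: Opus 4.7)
The plan is to decompose $E(H) = E(\hat{H}) \cup (E(H) \setminus E(\hat{H}))$ and bound each part separately. First, I would establish as an invariant that $\hat{H}$ has girth at least $2\log n + 1$ throughout the execution. The only way an edge is added to $\hat{H}$ is via the else-branch of $\textsc{InsertEdge}(e = (u,v))$, which is triggered exactly when $\mathcal{D}_{\hat{H}}.\textsc{Dist}(u,v) > 2\gamma \log n$. Since the data structure returns a value at most $\gamma \cdot \dist_{\hat{H}}(u,v)$, this forces $\dist_{\hat{H}}(u,v) > 2\log n$, so adding $(u,v)$ can only create new cycles of length strictly greater than $2\log n + 1$. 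Because edge removals only increase girth, a simple induction over operations yields the invariant, and \Cref{obs:girth} applied with $\delta = \log n$ then gives $|E(\hat{H})| \leq 2n^{1 + 1/\log n} = O(n)$.

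Next, I would bound $|E(H) \setminus E(\hat{H})|$ by amortizing over path-edge additions to $\Pi_{G \rightarrow H}$. An edge $f$ can lie in $E(H) \setminus E(\hat{H})$ only if it was removed from $\hat{H}$ by the congestion check in $\textsc{InsertEdge}$, since an external deletion removes $f$ from both $H$ and $\hat{H}$ simultaneously. Call such an event a \emph{congestion-removal}. Once it occurs, $f$ is no longer present in $\mathcal{D}_{\hat{H}}$, so it never appears on any subsequently returned path; in particular each edge undergoes at most one congestion-removal, and at the moment of removal $\econg(\Pi_{G \rightarrow H}, f) \geq m/n$, which requires at least $m/n$ historical additions of paths through $f$. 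By \Cref{corr:all_insertions} there are at most $2m$ calls to $\textsc{InsertEdge}$, each if-branch adding a path of length at most $2\gamma \log n$ to the embedding. Hence the total number of path-edge additions over the entire execution is bounded by $4\gamma m \log n$, and dividing by $m/n$ gives at most $4\gamma n \log n$ congestion-removals. Therefore $|E(H) \setminus E(\hat{H})| \leq 4\gamma n \log n$ at every point in time.

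Combining the two bounds yields $|E(H)| \leq O(n) + 4\gamma n \log n = O(\gamma n \log n)$, as claimed. The main subtlety is the girth invariant: one must correctly propagate the $\gamma$-approximation factor of the oracle to translate the threshold $\widetilde{\dist}(u,v) > 2\gamma \log n$ into a true lower bound on $\dist_{\hat{H}}(u,v)$, and then verify that once $f$ has been congestion-removed it genuinely cannot reappear on a future embedding path, which follows because every path is obtained from $\mathcal{D}_{\hat{H}}$ and $f$ has been deleted from it.
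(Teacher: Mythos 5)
Your proof is correct and follows essentially the same route as the paper: decompose $E(H)$ into $E(\hat H)$ and $E(H)\setminus E(\hat H)$, bound the former by the girth argument and \Cref{obs:girth}, and bound the latter by charging each congestion-removal (which requires $m/n$ accumulated embedding-path uses) against the at most $4\gamma m\log n$ total path-edge additions arising from the $\leq 2m$ calls to $\textsc{InsertEdge}$. You supply slightly more detail than the paper on the girth invariant and on why a congestion-removed edge cannot reappear on a reported path; the only minor imprecision is the claim that each edge undergoes at most one congestion-removal (an edge deleted from $G$ and later re-inserted could in principle be removed again), but since such an edge starts afresh with zero congestion, the charging argument over removal \emph{events} remains valid.
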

\begin{proof}
    We bound the edges in $E(H) \setminus E(\widehat{H})$ and in $E(\widehat{H})$ separately, starting with the former. These edges were congested by $m/n$ paths when they were removed from $\hat{H}$. Since the total number of edges ever added is $2m$, the total cumulative congestion that all the edges ever experience is at most $4 \cdot m \cdot \gamma \cdot \log n$ by the description of $\textsc{InsertEdge}()$. Therefore, the total number of edges in $E(H) \setminus E(\widehat{H})$ is at most $4 \cdot n \cdot \gamma \cdot \log n$. Finally, by \Cref{def:apsp} the graph $\hat{H}$ has girth at least $2\log n + 1$, and therefore at most $O(n)$ edges by \Cref{obs:girth}. 
\end{proof}
\begin{corollary} \label{cor:recourse}
    The recourse of $H$ is $O(\gamma \cdot n \log n)$.
\end{corollary}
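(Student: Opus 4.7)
The plan is to separately bound the number of insertions to $H$ and the number of deletions from $H$, then combine these two counts. The total recourse equals their sum.

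First I would bound the deletions from $H$. Inspecting $\textsc{DynamicSpanner}$, the only place $H$ shrinks is the line $H \gets H \setminus e$ inside $\textsc{DeleteEdge}(e)$, and this step only fires when $e \in H$. Each such step is triggered by a distinct external edge deletion, and by assumption there are at most $n$ external deletions, so at most $n$ edges are removed from $H$ in total.

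Next I would bound the insertions to $H$. The only place $H$ grows is the \texttt{else} branch of $\textsc{InsertEdge}(e)$. Since $H$ starts empty, the telescoping identity
\[
\text{(\# insertions to $H$)} \;=\; |E(H)|_{\text{final}} \;+\; \text{(\# deletions from $H$)}
\]
holds. By \Cref{clm:number_edges}, $|E(H)| \leq O(\gamma \cdot n \log n)$ holds throughout, in particular at the end of the update sequence, so combining with the deletion bound from the previous paragraph gives at most $O(\gamma \cdot n \log n) + n = O(\gamma \cdot n \log n)$ insertions to $H$.

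Summing insertions and deletions yields total recourse $O(\gamma \cdot n \log n) + n = O(\gamma \cdot n \log n)$, as claimed. There is no real obstacle here: \Cref{clm:number_edges} already carries all the weight via the cumulative congestion argument, and the corollary just needs the observation that bounding the live size and the deletion count together controls the cumulative insertion count.
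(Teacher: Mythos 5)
Your proof is correct and takes essentially the same approach as the paper: the paper's one-line proof ("follows from \Cref{clm:number_edges} since at most $n$ edges get removed from $H$") is exactly the telescoping argument you spell out. You have simply made the implicit bookkeeping explicit.
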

\begin{proof}
    Follows from \Cref{clm:number_edges} since at most $n$ edges get removed from $H$. 
\end{proof}

It remains to bound the total time spent processing updates. 

\begin{lemma}\label{lma:runtime}
    The algorithm processes $m$ insertions and $n$ deletions in total time $O(\beta \cdot \gamma \cdot m 
\cdot \log n + \alpha \cdot \gamma \cdot n \cdot \log n)$. 
\end{lemma}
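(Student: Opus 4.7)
The plan is to bound the total running time by separately accounting for (i) the cost incurred inside each call to $\textsc{InsertEdge}$ other than updates to $\mathcal{D}_{\hat H}$, and (ii) the total cost of all $\textsc{AddEdge}$ and $\textsc{RemoveEdge}$ calls made to the APSP data structure $\mathcal{D}_{\hat H}$. The cost of $\textsc{DeleteEdge}$ outside of these two buckets is dominated by the work attributable to the re-insertions it triggers, so I do not need to charge it separately.

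For (i), by \Cref{corr:all_insertions} there are at most $2m$ invocations of $\textsc{InsertEdge}$. Each invocation performs a single $\textsc{Dist}(u,v)$ query, which costs $\beta$, and, if the returned distance is at most $2\gamma\log n$, a single $\textsc{Path}(u,v)$ query, which returns a path $P$ of length $|P|\le 2\gamma\log n$ in time $\beta\cdot|P|=O(\beta\gamma\log n)$. The subsequent loop over edges of $P$ does $O(1)$ bookkeeping per edge (maintaining $\Pi_{G\to H}$ and the congestion counter), contributing $O(\gamma\log n)$ work. Summed over all $\textsc{InsertEdge}$ calls this gives $O(\beta\gamma m\log n)$.

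For (ii), every $\textsc{AddEdge}(e)$ call to $\mathcal{D}_{\hat H}$ happens on the \emph{else} branch of $\textsc{InsertEdge}$, which is also exactly when $e$ is added to $H$. Hence the total number of such calls is at most the total number of edges ever added to $H$, which by \Cref{clm:number_edges} (plus the fact that at most $n$ edges are ever deleted from $H$) is $O(\gamma n\log n)$. Analogously, every $\textsc{RemoveEdge}$ call removes an edge from $\hat H$, and since $\hat H$ starts empty the number of removals is bounded by the number of additions to $\hat H$. Each of these operations costs $\alpha$ amortized, giving a total of $O(\alpha\gamma n\log n)$.

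Adding the two contributions yields total update time $O(\beta\gamma m\log n+\alpha\gamma n\log n)$, as claimed. The only slightly delicate accounting step is observing that $\textsc{DeleteEdge}$ does not need its own bound beyond the APSP work it performs: the $F$-loop ends in re-insertions already counted in bucket (i) via \Cref{corr:all_insertions}, and the single $\mathcal{D}_{\hat H}.\textsc{RemoveEdge}(e)$ performed per external deletion is subsumed by the $O(\gamma n\log n)$ bound in bucket (ii). I expect no real obstacle here beyond making this charging argument explicit.
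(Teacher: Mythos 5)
Your proof is correct and takes essentially the same approach as the paper's: charge $O(\beta\gamma\log n)$ to each of the at most $2m$ calls to $\textsc{InsertEdge}$ (via \Cref{corr:all_insertions}), and $\alpha$ to each of the $O(\gamma n \log n)$ update operations on $\mathcal{D}_{\hat H}$ — the paper invokes \Cref{cor:recourse} for the latter count, while you unpack it directly into \Cref{clm:number_edges} plus the $n$-deletion bound, which is the same thing.
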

\begin{proof}
    There are at most $2m$ calls to $\textsc{InsertEdge}()$. Each such call causes a distance and possibly a path reporting query to $\mathcal{D}_{\hat{H}}$. Whenever a path reporting query is called, the resulting path $P$ has length at most $2\gamma \cdot \log n$. This causes a total runtime of $O(\beta \cdot \gamma \cdot m \cdot \log n)$.

    Then, by \Cref{cor:recourse}, at most $O(\gamma \cdot n \log n)$ get added and therefore also removed from $\hat{H}$. These cause $O(\gamma \cdot n \log n)$ update calls to $\mathcal{D}_{\hat{H}}$. This causes total runtime $O(\alpha \cdot \gamma \cdot n \cdot \log n)$. 

    All other operations can be subsumed in the times above. 
\end{proof}

Our main theorem then follows directly.

\begin{theorem} \label{thm:main_apsp}
    Given a $\gamma$-approximate $\alpha$-update $\beta$-query APSP data structure, there is an algorithm that maintains a $2 \gamma \log n$-approximate spanner $H$ of an initially empty graph $G$ on $n$ vertices throughout a sequence up to $m$ insertions and $n$ deletions in total time $O(\beta \cdot \gamma \cdot m 
\cdot \log n  + \alpha \cdot \gamma \cdot n \cdot \log n)$. The total recourse of $H$ is $O(\alpha \cdot \gamma \cdot n \log n)$. 
\end{theorem}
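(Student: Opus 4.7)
The plan is to observe that this theorem is really just an assembly of the preceding lemmas and corollaries, so the proof proposal mainly has to verify that each of the three claimed guarantees (approximation, runtime, recourse) has already been established, and then combine them. I would run \textsc{DynamicSpanner}() instantiated with the given $\gamma$-approximate $\alpha$-update $\beta$-query APSP data structure $\mathcal{D}_{\hat{H}}$, and then cite the earlier results in sequence.

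First, for the approximation guarantee, I would invoke the lemma stating that $H$ is a $2 \gamma \log n$-spanner throughout. The key idea there was that every edge $e \in E(G)$ always has a valid embedding path $\Pi_{G \rightarrow H}(e)$ of length at most $2 \gamma \log n$, because at insertion time the APSP data structure returns such a path and \textsc{DeleteEdge} re-inserts edges whose embedding paths break. This gives the claimed stretch.

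Second, for the runtime bound, I would directly apply \Cref{lma:runtime}, which bounds total time spent in $\textsc{InsertEdge}$ and $\textsc{DeleteEdge}$ calls by $O(\beta \cdot \gamma \cdot m \cdot \log n + \alpha \cdot \gamma \cdot n \cdot \log n)$. The ingredients behind this bound are \Cref{corr:all_insertions} (at most $2m$ insertion calls, each costing one distance query and one path query of length $O(\gamma \log n)$) together with \Cref{cor:recourse} (at most $O(\gamma \cdot n \log n)$ edges ever added to or removed from $\hat{H}$).

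Third, for the recourse bound, I would invoke \Cref{cor:recourse}, which follows from the edge-count bound \Cref{clm:number_edges} ($|E(H)| \leq O(\gamma \cdot n \log n)$) together with the fact that at most $n$ edges are ever removed from $H$. The edge-count bound in turn rests on \Cref{clm:econg} (each edge in $H$ carries at most $m/n$ embedding paths) and the girth-based \Cref{obs:girth} applied to $\hat{H}$. I do not anticipate any obstacle here, since all of the work is done in the preceding statements; the only thing to be mindful of is that the three bounds quoted for the underlying APSP structure ($\gamma$, $\alpha$, $\beta$) match exactly in the final statement and in \Cref{lma:runtime} and \Cref{cor:recourse}.
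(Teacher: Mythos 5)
Your proposal is correct and matches the paper's own proof, which simply cites \Cref{cor:recourse} and \Cref{lma:runtime} (with the spanner-stretch lemma implicit); your assembly of the approximation, runtime, and recourse bounds is exactly the intended argument.
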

\begin{proof}
    Follows from \Cref{cor:recourse} and \Cref{lma:runtime}.
\end{proof}

Finally, \Cref{thm:main} follows from \Cref{thm:main_apsp} in conjunction with any of the APSP data structures of \cite{CZ23, kyng2023dynamic, haeupler2024dynamicdeterministicconstantapproximatedistance}. 

\section*{Acknowledgments}

The authors are very grateful for feedback from Maximilian Probst Gutenberg and Pratyai Mazumder on a draft of this article that improved the presentation.  

\newpage
\bibliographystyle{alpha}
\bibliography{refs}
\end{document}